\newcommand{\copyrightnote}[2]{{\renewcommand{\thefootnote}{}
 \footnotetext{\small\it
\begin{flushleft}
 \copyright \ #1   #2  
\end{flushleft}}}}
\newcommand{\Name}[1]{\begin{flushleft}
                       \LARGE \bf #1
                       \end{flushleft}\vspace{-3mm}}
\newcommand{\Author}[1]{\begin{flushleft}
                       \it #1 \end{flushleft}}
\newcommand{\Address}[1]{\begin{flushleft}
                       \it #1 \end{flushleft}}
\newcommand{\Date}[1]{\begin{flushleft}
                      \small  \it #1 \end{flushleft}}
\newcommand{\evenhead}{Author \ name}
\newcommand{\oddhead}{Article \ name}
\renewcommand{\@evenhead}{
\hspace*{-3pt}\raisebox{-15pt}[\headheight][0pt]{\vbox{\hbox to \textwidth
{\thepage \hfil \evenhead}\vskip4pt \hrule}}}
\renewcommand{\@oddhead}{
\hspace*{-3pt}\raisebox{-15pt}[\headheight][0pt]{\vbox{\hbox to \textwidth
{\oddhead \hfil \thepage}\vskip4pt\hrule}}}
\renewcommand{\@evenfoot}{}
\renewcommand{\@oddfoot}{}
\long\def\@makecaption#1#2{%
  \vskip\abovecaptionskip
  \sbox\@tempboxa{\small \textbf{#1.}\ \ #2}%
  \ifdim \wd\@tempboxa >\hsize
    {\small \textbf{#1.}\ \ #2}\par
  \else
    \global \@minipagefalse
    \hb@xt@\hsize{\hfil\box\@tempboxa\hfil}%
  \fi
  \vskip\belowcaptionskip}
\newcommand{\JNMPnumberwithin}[3][\arabic]{%
  \@ifundefined{c@#2}{\@nocounterr{#2}}{%
    \@ifundefined{c@#3}{\@nocnterr{#3}}{%
      \@addtoreset{#2}{#3}%
      \@xp\xdef\csname the#2\endcsname{%
        \@xp\@nx\csname the#3\endcsname .\@nx#1{#2}}}}%
}
\renewenvironment{proof}[1][\proofname]{\par
  \normalfont
  \topsep6\p@\@plus6\p@ \trivlist
  \item[\hskip\labelsep\textbf{%
    #1\@addpunct{.}}]\ignorespaces
}{%
  \qed\endtrivlist
}
\newcommand{\resetfootnoterule} {
  \renewcommand\footnoterule{%
  \kern-3\p@
  \hrule\@width.4\columnwidth
  \kern2.6\p@}
}
\renewcommand{\footnoterule}{}
\theoremstyle{definition}
\newtheorem*{definition}{Definition}
\newtheorem{remark}{Remark}
\theoremstyle{plain}
\newtheorem{proposition}{Proposition}
\begin{document}

\renewcommand{\evenhead}{ {\LARGE\textcolor{blue!10!black!40!green}{{\sf \ \ \ ]ocnmp[}}}\strut\hfill V E Adler}
\renewcommand{\oddhead}{ {\LARGE\textcolor{blue!10!black!40!green}{{\sf ]ocnmp[}}}\ \ \ \ \  Negative flows and non-autonomous reductions of the Volterra lattice}

\thispagestyle{empty}
\newcommand{\FistPageHead}[3]{
\begin{flushleft}
\raisebox{8mm}[0pt][0pt]
{\footnotesize \sf
\parbox{150mm}{{Open Communications in Nonlinear Mathematical Physics}\ \ \ \ {\LARGE\textcolor{blue!10!black!40!green}{]ocnmp[}}
\quad Special Issue 1, 2024\ \  pp
#2\hfill {\sc #3}}}\vspace{-13mm}
\end{flushleft}}

\FistPageHead{1}{\pageref{firstpage}--\pageref{lastpage}}{ \ \ }

\strut\hfill

\strut\hfill

\copyrightnote{The author(s). Distributed under a Creative Commons Attribution 4.0 International License}

\begin{center}
{  {\bf This article is part of an OCNMP Special Issue\\ 
\smallskip
in Memory of Professor Decio Levi}}
\end{center}

\smallskip

\Name{Negative flows and non-autonomous reductions of the Volterra lattice}

\Author{V.E.\:Adler $^{1,2}$}

\Address{$^1$ L.D.\:Landau Institute for Theoretical Physics, Akademika Semenova av. 1A, 142432, Chernogolovka, Russian Federation\\ $^2$ Institute of Mathematics, Ufa Federal Research Centre, Russian Academy of Sciences, Chernyshevsky str. 112, 450008, Ufa, Russian Federation}

\Date{Received July 18, 2023; Accepted December 11, 2023}

\setcounter{equation}{0}

\begin{abstract}

\noindent 
We study reductions of the Volterra lattice corresponding to stationary equations for the additional, noncommutative subalgebra of symmetries. It is shown that, in the case of general position, such a reduction is equivalent to the stationary equation for a sum of the scaling symmetry and the negative flows, and is written as $(m+1)$-component difference equations of the Painlev\'e type generalizing the dP$_1$ and dP$_{34}$ equations. For these reductions, we present the isomonodromic Lax pairs and derive the B\"acklund transformations which form the $\mathbb{Z}^m$ lattice.

\end{abstract}

\label{firstpage}


\section{Introduction}

The Volterra lattice
\begin{equation}\label{i.ux}
 u_{n,x}=u_n(u_{n+1}-u_{n-1})
\end{equation}
is, along with the Toda lattice, one of the oldest integrable models with discrete variables \cite{Manakov_1974, Kac_Moerbeke_1975}, the study of which significantly expanded the scope of the inverse scattering method. Great merit in the development of the theory of differential-difference equations belongs to Decio Levi. In his pioneering work \cite{Levi_1981} it was shown that such equations can be naturally interpreted as the B\"acklund transformations for partial differential equations. The lattice equation (\ref{i.ux}) admits the symmetry
\begin{equation}\label{i.ut}
 u_{n,t}=u_n(u_{n+1}(u_{n+2}+u_{n+1}+u_n)-u_{n-1}(u_n+u_{n-1}+u_{n-2}))
\end{equation}
(which means that this pair of equations is consistent). Due to equation (\ref{i.ux}), all variables $u_{n\pm i}$ can be expressed in terms of a pair of neighboring variables $u=u_n$, $v=u_{n+1}$ and their $x$-derivatives, hence equation (\ref{i.ut}) turns out to be equivalent to some PDE system for $u$ and $v$, namely, 
\begin{equation}\label{Levi}
 u_t=-u_{xx}+(2uv+u^2)_x,\quad v_t=v_{xx}+(2uv+v^2)_x.
\end{equation}
Since $n$ is arbitrary, the mapping $(u_n,u_{n+1})\mapsto(u_{n+1},u_{n+2})$ defined by equation (\ref{i.ux}), serves as the simplest B\"acklund transformation for this system. A number of works by Levi and his coauthors were devoted to classification of integrable lattice equations, description of the algebra of their higher and classical symmetries, methods for constructing exact solutions, and connections with other classes of integrable equations; we only mention the papers \cite{Levi_Yamilov_1999, Levi_Winternitz_Yamilov_2001} and the book by D. Levi, P. Winternitz and R. Yamilov \cite{Levi_Winternitz_Yamilov_2022} which summed up their many years of work in the field of discrete equations.

In this paper, we consider the problem of constructing reductions of the Volterra lattice, that is, the constraints which are consistent with the shift operator $T:n \mapsto n+1$. In particular, we demonstrate that equations (\ref{i.ux}) and (\ref{i.ut}) admit a family of finite-dimensional non-autonomous reductions defined by the following $(m+1)$-component system of difference equations: 
\begin{equation}\label{nar-xt}
\left\{\begin{aligned}
 & u_n(y^j_{n+1}+y^j_n)(y^j_n+y^j_{n-1})=\alpha^j(y^j_n)^2+(-1)^n\beta^jy^j_n+\gamma^j,\qquad j=1,\dots,m, \\[3pt]
 & 4tu_n(u_{n+1}+u_n+u_{n-1})+2xu_n+n-\delta+(-1)^n\varepsilon=y^1_n+\dots+y^m_n
\end{aligned}\right.
\end{equation}
where $\alpha^j$, $\beta^j$, $\gamma^j$, $\delta$ and $\varepsilon$ are arbitrary constants. Here the first $m$ equations play the role of definition of the auxiliary non-local variables $y^1_n,\dots,y^m_n$, and the last equation defines the constraint for $u_n$. The main proposition is that this system is compatible with derivations with respect to the parameters $x$ and $t$ defined by (\ref{i.ux}), (\ref{i.ut}) and the consistent rules for differentiating $y^j_n$ (given by equations (\ref{yx}), (\ref{yt}) in the next section).

For the simplect particular case $m=0$ studied in \cite{Fokas_Its_Kitaev_1991}, the system (\ref{nar-xt}) consists of a single equation for $u_n$
\begin{equation}\label{dP1}
 4tu_n(u_{n+1}+u_n+u_{n-1})+2xu_n+n-\delta+(-1)^n\varepsilon=0, 
\end{equation}
which coincides up to dilations with the discrete Painlev\'e equation dP$_1$ \cite[eq.\,(51)]{Grammaticos_Ramani_2014}, while the evolution with respect to the parameter $x$ is governed by the continuous P$_4$ equation. For the case $m=1$ and $t=0$ studied in \cite{Adler_Shabat_2019}, $u_n$ is expressed in terms of $y_n$ from the second equation (\ref{nar-xt}) and the first one takes the form
\begin{equation}\label{dP34}
 (y^1_{n+1}+y^1_n)(y^1_n+y^1_{n-1})
  = 2x\frac{\alpha(y^1_n)^2+(-1)^n\beta y^1_n+\gamma_n}{y^1_n-n+\delta-(-1)^n\varepsilon}.
\end{equation}
This is equivalent to the discrete Painlev\'e equation dP$_{34}$ \cite[eqs.\,(13), (45)]{Grammaticos_Ramani_2014}, while the evolution with respect to $x$ is governed by P$_5$ for $\alpha\ne0$ and P$_3$ for $\alpha=0$. Non-Abelian versions of (\ref{dP1}) and (\ref{dP34}) were proposed in \cite{Adler_2020a}.

Both equations (\ref{dP1}) and (\ref{dP34}), as well as the general system (\ref{nar-xt}) belong to the so-called string equations, that is, stationary equations for non-autonomous symmetries of the Volterra lattice. In particular, it is easy to see that (\ref{dP1}) is obtained by integrating from the stationary equation for the classical scaling symmetry which is defined as
\[
 u_{n,\tau_1} = 2tu_{n,t}+xu_{n,x}+u_n.
\]
Similarly, in \cite{Adler_Shabat_2019} it was shown that equation (\ref{dP34}) can be obtained from the stationary equation for the master-symmetry of the lattice equation (\ref{i.ux}) which is of the form \cite{Fuchssteiner_1983, Oevel_Zhang_Fuchssteiner_1989, Cherdantsev_Yamilov_1995, Levi_Winternitz_Yamilov_2022}
\[
 u_{n,\tau_2}= xu_{n,t}+u_n((n+3)u_{n+1}+u_n-nu_{n-1}),
\]
although here the passage to (\ref{dP34}) is less obvious and requires some change of variables. Our goal in this article is to generalize this change for the string equations of arbitrary order.

Let us outline the content of the work. The main results are obtained in Section \ref{s:nar}, where it is shown that the system (\ref{nar-xt}) (and its further extensions) is equivalent in the case of general position to the stationary equation for higher non-autonomous symmetries. On the other hand, the system (\ref{nar-xt}) defines the stationary equation for the sum of the scaling symmetry and the negative symmetries of the Volterra lattice, which are defined in terms of the variables $y^j_n$. We mention papers \cite{Orlov_Rauch-Wojciechowski_1993, Hone_Kuznetsov_Ragnisco_1999, Adler_Kolesnikov_2023} where similar results were obtained for the Korteweg--de Vries (KdV) equation.

The negative flows are introduced in Section \ref{s:neg} based on the notion of the recursion operator \cite{Oevel_Zhang_Fuchssteiner_1989, Khanizadeh_Mikhailov_Wang_2013}. Many of the formulas in this section are fairly standard and easy to find in the literature, in particular, in the books \cite{Suris_2003, Levi_Winternitz_Yamilov_2022}. Negative symmetry can be interpreted as a generating function for higher symmetries, and the same is true for the corresponding matrices from the Lax representations. This idea itself is not new and goes back to \cite{Gelfand_Dikii}, where the flows of the KdV hierarchy are derived from the resolvent of the Sturm--Liouville operator by expansion with respect to the spectral parameter. At the same time, the author is not aware of works where the definition of the negative symmetry for the Volterra hierarchy would be given in the general form. Note that our Definition \ref{def:neg} includes, as a special case, the negative flow studied in \cite{Pritula_Vekslerchik_2003, Hu_Xue_2003, Zhu_Tam_2004}.

The final Section \ref{s:DBT} contains the Darboux--B\"acklund transformations, in the form of involutive rational mappings that preserve the form of the system (\ref{nar-xt}) and are consistent with the $x$- and $t$-evolution.

\section{Negative flows of the Volterra lattice}\label{s:neg}

Let us recall that the hierarchy of the Volterra lattice
\begin{equation}\label{ux}
 u_{n,x}=u_n(u_{n+1}-u_{n-1})
\end{equation}
is generated by the recursion operator (see e.g. \cite{Oevel_Zhang_Fuchssteiner_1989, Khanizadeh_Mikhailov_Wang_2013})
\begin{equation}\label{R}
 R= u_n+u_n(u_{n+1}T^2-u_{n-1}T^{-1})(T-1)^{-1}\frac{1}{u_n},
\end{equation}
where $T:n\mapsto n+1$ is the shift operator. Applying $R$ to the right hand side of (\ref{ux}) gives the first higher symmetry
\begin{equation}\label{ut}
 u_{n,t}=u_n(u_{n+1}(u_{n+2}+u_{n+1}+u_n)-u_{n-1}(u_n+u_{n-1}+u_{n-2})),
\end{equation}
up to the addition of the term $cu_{n,x}$, where $c$ is an arbitrary constant of integration arising from the non-empty kernel of the operator $T-1$. Further application of $R$ generates an infinite sequence of derivations $D_i$ defined by equations
\begin{equation}\label{Di}
 u_{n,t_i}=R^{i-1}(u_{n,x})=u_n(T-T^{-1})(h^{(i)}_n),\quad i=1,2,\dotso.
\end{equation}
For uniformity, we use the notation $x=t_1$ and $t=t_2$ here and below. Equations (\ref{Di}) are local (that is, the right-hand sides are expressed in terms of $u_n$ variables) and are consistent with each other. More precisely, one can prove that $h^{(i)}_n$ are polynomials of degree $i$ in $u_{n-i+1},\dots,u_{n+i-1}$, homogeneous if one choose zero integration constants at each step, and satisfying the identities
\begin{equation}\label{hh}
 h^{(j)}_{n,t_i}=h^{(i)}_{n,t_j}
\end{equation}
which imply the property $[D_i,D_j]=0$.

Negative symmetries are defined by inverting the operator $R-\alpha$ with an arbitrary constant $\alpha$ (which is natural, since such an operator is also a recursion operator), that is, as a flow of the form
\begin{equation}\label{Rneg}
 u_{n,\xi}= (R-\alpha)^{-1}(u_{n,x}).
\end{equation}
Unlike higher symmetries, this equation is nonlocal: to write it explicitly, it is necessary to expand the set of dynamic variables. Let us show how this can be done. Let $g_n$ be the right hand side of (\ref{Rneg}), that is
\begin{equation}\label{Rag}
 (R-\alpha)(g_n)=u_n(u_{n+1}-u_{n-1}).
\end{equation}
To be able to apply $R$, we set $g_n=u_n(z_{n+1}-z_n)$; this gives
\[
 u^2_n(z_{n+1}-z_n)+u_n(u_{n+1}z_{n+2}-u_{n-1}z_{n-1})-\alpha u_n(z_{n+1}-z_n)=u_n(u_{n+1}-u_{n-1})
\]
and after canceling the factor $u_n$ we come to
\begin{equation}\label{uz}
 (T+1)(u_nz_{n+1}-u_{n-1}z_{n-1}-u_n+u_{n-1})-\alpha(z_{n+1}-z_n)=0.
\end{equation}
For the second term to lie in the image of $T+1$, we set $z_n-1=y_n+y_{n-1}$, then the equation is integrated once:
\begin{equation}\label{uy'}
 u_n(y_{n+1}+y_n)-u_{n-1}(y_{n-1}+y_{n-2})-\alpha(y_n-y_{n-1})+(-1)^n\beta=0
\end{equation}
(note that exactly the same equation is obtained if we start from (\ref{Rag}) with zero right hand side and apply the substitution $z_n=y_n+y_{n-1}$ instead, which amounts to changing the integration constant in $R(g_n)$). Equation (\ref{uy'}) can be integrated once again, noting that after multiplication by $y_n+y_{n-1}$ the left hand side belongs to the image of $T-1$. As a result, we arrive at the following definition.

\begin{definition}\label{def:neg}
The negative symmetry of the Volterra hierarchy is a flow of the form
\begin{equation}\label{neg}
 u_{n,\xi}= u_n(y_{n+1}-y_{n-1})
\end{equation}
where the non-local variable $y_n$ satisfies the following equations:
\begin{equation}\label{uy}
 u_n(y_{n+1}+y_n)(y_n+y_{n-1})=\alpha y^2_n+(-1)^n\beta y_n+\gamma
\end{equation}
with constants $\alpha$, $\beta$ and $\gamma$ which do not vanish simultaneously, and 
\begin{equation}\label{Diy}
 y_{n,t_i}=h^{(i)}_{n,\xi}
\end{equation}
where $h^{(i)}_n$ are homogeneous polynomials from (\ref{Di}).
\end{definition}

The rule (\ref{Diy}) defines the prolongation of derivations $D_i$ to the variables $y_n$ in a way consistent with the commutativity property of the flows (\ref{Di}) and (\ref{neg}). For instance, derivations $D_1=D_x$ and $D_2=D_t$ correspond to $h^{(1)}_n=u_n$ and $h^{(2)}_n=u_n(u_{n+1}+u_n+u_{n-1})$, hence the extensions of the flows (\ref{ux}) and (\ref{ut}) are defined by equations
\begin{gather}
\label{yx}
 y_{n,x}= u_n(y_{n+1}-y_{n-1}),\\
\label{yt}
 y_{n,t}= u_n\bigl(u_{n+1}(y_{n+2}-y_n)+(u_{n+1}+2u_n+u_{n-1})(y_{n+1}-y_{n-1})+u_{n-1}(y_n-y_{n-2})\bigr).
\end{gather}
The relation (\ref{hh}) implies that $D_i$ and $D_j$ remain commutative on the extended variable set. For the correctness of the Definition \ref{def:neg}, it is also necessary that $D_i$ be consistent with the difference equation (\ref{uy}). The check of consistency for $D_1=D_x$ and $D_2=D_t$ can be done by direct calculation.

\begin{proposition}\label{pr:neg}
Equation (\ref{uy}) is consistent with derivations $D_x$ and $D_t$ defined by (\ref{ux}), (\ref{yx}) and (\ref{ut}), (\ref{yt}).
\end{proposition}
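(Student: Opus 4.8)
The plan is to verify consistency by a direct computation, showing that the difference equation (\ref{uy}) is preserved under both flows. Denote by $F_n := u_n(y_{n+1}+y_n)(y_n+y_{n-1}) - \alpha y_n^2 - (-1)^n\beta y_n - \gamma$ the expression that vanishes on solutions. The goal is to show that $D_x F_n$ and $D_t F_n$ both vanish as consequences of $F_{n+k}=0$ for the relevant range of shifts $k$. Since $F_n$ involves $u_n, y_{n-1}, y_n, y_{n+1}$, applying $D_x$ (using (\ref{ux}) for $u_{n,x}$ and (\ref{yx}) for each $y_{n,x}$, $y_{n\pm1,x}$) produces an expression in $u_{n-1},u_n,u_{n+1}$ and $y_{n-2},\dots,y_{n+2}$; one then reduces it modulo $F_{n-1}, F_n, F_{n+1}$ and checks it is identically zero.

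Concretely for the $x$-flow, I would compute
\[
 D_x F_n = u_{n,x}(y_{n+1}+y_n)(y_n+y_{n-1}) + u_n\,D_x\bigl((y_{n+1}+y_n)(y_n+y_{n-1})\bigr) - (2\alpha y_n + (-1)^n\beta)\,y_{n,x},
\]
substitute $u_{n,x}=u_n(u_{n+1}-u_{n-1})$, $y_{k,x}=u_k(y_{k+1}-y_{k-1})$, expand, and collect terms. The terms involving $u_{n+1}$ (coming from $u_{n,x}$, from $y_{n+1,x}$, and from $y_{n,x}$ via $u_n y_{n+1}$) should reorganize into a multiple of $F_{n+1}$; symmetrically the $u_{n-1}$ terms should reorganize into a multiple of $F_{n-1}$; and the remaining purely $u_n$ terms should reorganize into a multiple of $F_n$ (or cancel outright). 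The cleanest way to see this is to note that $F_n = 0$ lets us write $u_n(y_{n+1}+y_n) = (\alpha y_n^2 + (-1)^n\beta y_n + \gamma)/(y_n+y_{n-1})$, so natural combinations collapse; but for a written proof it is enough to exhibit the explicit polynomial identity $D_x F_n = A_n F_{n+1} + B_n F_n + C_n F_{n-1}$ with rational coefficients $A_n,B_n,C_n$, which is a finite check.

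For the $t$-flow the computation is structurally identical but heavier: $F_n$ now gets differentiated using (\ref{ut}) and (\ref{yt}), which involve $u_{n-2},\dots,u_{n+2}$ and $y_{n-3},\dots,y_{n+3}$, so after substitution one reduces modulo $F_{n-2},F_{n-1},F_n,F_{n+1},F_{n+2}$. Alternatively — and this is probably the better route — I would invoke the construction that led to Definition \ref{def:neg}: equation (\ref{uy}) was obtained by integrating (\ref{uy'}), which in turn came from (\ref{uz}), i.e. from $(R-\alpha)(g_n)=u_{n,x}$ with $g_n = u_n(y_{n+1}-y_{n-1})$; and the prolongation rule (\ref{Diy}), $y_{n,t_i}=h^{(i)}_{n,\xi}$, was designed precisely so that $D_t$ commutes with the negative flow $D_\xi$. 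Granting $[D_x,D_t]=0$ on the extended variables (which follows from (\ref{hh}) as noted in the text) and the consistency of $D_x$ with (\ref{uy}), one can argue that $D_t F_n$ is determined by $D_x(D_t F_n)=D_t(D_x F_n)$ together with $D_x F_n \equiv 0$, reducing the $t$-case to an ODE in $x$ with zero right-hand side and appropriate initial data — though making this rigorous still requires knowing $D_t F_n$ vanishes at one point, so in practice the direct check is unavoidable.

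The main obstacle is purely the bookkeeping in the $t$-flow verification: the wide stencil of (\ref{yt}) means dozens of monomials in the $u$'s and $y$'s, and organizing the cancellation into the combination $\sum_{k=-2}^{2} G^{(k)}_n F_{n+k}$ by hand is error-prone. I expect to handle it by the same shift-by-shift grouping used for $D_x$ — isolating the coefficient of the extreme variable $u_{n+2}$ to pin down the multiple of $F_{n+2}$, subtracting it, then the coefficient of $u_{n+1}$, and so on — so that at each stage the residual expression has strictly smaller support, terminating in an identity that is manifestly zero. This is the step I would, in a full write-up, either delegate to a symbolic computation or present only as the final polynomial identity, since there is no conceptual difficulty beyond it.
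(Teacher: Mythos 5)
Your proposal matches the paper's proof in essence: the paper also defines $F_n$ and verifies by direct calculation that $F_{n,x}$ and $F_{n,t}$ vanish modulo shifts of $F_n$, the only organizational difference being that it expresses the result through the quantities $G_n=(F_n-F_{n-1})/(y_n+y_{n-1})$ (the left-hand side of (\ref{uy'})), which makes the coefficients in the decomposition polynomial rather than rational and keeps the $t$-flow identity compact. Your explicit target identity $D_xF_n=A_nF_{n+1}+B_nF_n+C_nF_{n-1}$ is exactly what the paper's formula (\ref{Fx}) amounts to after substituting the definition of $G_n$, so the approach is correct and essentially identical.
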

\begin{proof}
Denote $F_n=u_n(y_{n+1}+y_n)(y_n+y_{n-1})-\alpha y^2_n-(-1)^n\beta y_n-\gamma$ and let $G_n=\dfrac{F_n-F_{n-1}}{y_n+y_{n-1}}$ be the left hand side of (\ref{uy'}). It is verified by direct calculation that
\begin{align}
\label{Fx}
 & F_{n,x} = u_n(y_n+y_{n-1})G_{n+1}+u_n(y_{n+1}+y_n)G_n,\\
\label{Ft} 
 & \begin{aligned}[b]
  F_{n,t} & = u_nu_{n+1}(y_n+y_{n-1})G_{n+2}\\
   &\qquad +u_n\bigl(u_{n-1}(y_{n-1}+y_{n-2})+2(u_{n+1}+u_n)(y_n+y_{n-1})\bigr)G_{n+1}\\
   &\qquad +u_n\bigl(u_{n+1}(y_{n+2}+y_{n+1})+2(u_n+u_{n-1})(y_{n+1}+y_n)\bigr)G_n\\
   &\qquad + u_nu_{n-1}(y_{n+1}+y_n)G_{n-1}.
 \end{aligned}  
\end{align}
The right hand sides vanish if $F_n=0$ for all $n$, as required.
\end{proof}

Note that when checking the commutativity property $[D_i,D_j]=0$, the equation (\ref{uy}) is not used. Moreover, commutativity is also preserved for more general extensions of the form $y_{n,t_i}=h^{(i)}_{n,\xi}+\delta^{(i)}_n$ where $\delta^{(i)}_{n+2}=\delta^{(i)}_n$ are ``blinking'' constants. However, for such flows, compatibility with (\ref{uy}) is satisfied only for $\delta^{(i)}_n=0$. For the examples of $D_x$ and $D_t$, this can be verified directly, similar to the above proof.

One application of the negative symmetry is that the expansion $(R-\alpha)^{-1}=\alpha^{-1}(1+R/\alpha+R^2/\alpha^2+\dots)$ 
allows interpreting the flow (\ref{neg}) as a generating function for the higher symmetries (\ref{Di}):
\begin{equation}\label{tseries}
 u_{n,\xi}= \frac{1}{\alpha}u_{n,t_1}+\frac{1}{\alpha^2}u_{n,t_2}+\frac{1}{\alpha^3}u_{n,t_3}+\dotso.
\end{equation}
In particular, this leads to the following assertion, which implies the locality property of flows (\ref{Di}) formulated above (note that their commutativity can also be proved from the consistency property of negative symmetries with different parameters $\alpha$, but we will not delve into this).

\begin{proposition}\label{pr:negR}
Homogeneous polynomials $h^{(i)}_n(u_{n-i+1},\dots,u_{n+i-1})$ defining the higher symmetries (\ref{Di}) are calculated as the coefficients of the formal series
\begin{equation}\label{hseries}
 y_n=\frac{1}{2}+\frac{h^{(1)}_n}{\alpha}+\frac{h^{(2)}_n}{\alpha^2}+\dots
\end{equation}
satisfying the equation
\begin{equation}\label{uyhom}
 u_n(y_{n+1}+y_n)(y_n+y_{n-1})=\alpha y^2_n -\frac{\alpha}{4},
\end{equation}
which is equivalent to the explicit recurrent relations
\begin{equation}\label{hrec}
 h^{(i+1)}_n = u_n\sum^i_{s=0}\bigl(h^{(s)}_{n+1}+h^{(s)}_n\bigr)
  \bigl(h^{(i-s)}_n+h^{(i-s)}_{n-1}\bigr)
  -\sum^i_{s=1}h^{(s)}_nh^{(i+1-s)}_n,\quad 
 h^{(0)}_n=\frac{1}{2}.
\end{equation}
\end{proposition}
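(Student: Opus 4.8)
The plan is to prove the statement in two stages: first the equivalence of the functional equation (\ref{uyhom}) for the formal series with the recurrence (\ref{hrec}), and then the identification of the resulting coefficients with the densities $h^{(i)}_n$ of the higher flows (\ref{Di}).

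For the first stage I would substitute the ansatz $y_n=\sum_{s\ge0}\alpha^{-s}h^{(s)}_n$, with $h^{(0)}_n=\tfrac12$, directly into (\ref{uyhom}) and compare coefficients of $\alpha^{-i}$. On the left-hand side the Cauchy product of the two factors $\sum_s\alpha^{-s}(h^{(s)}_{n+1}+h^{(s)}_n)$ and $\sum_s\alpha^{-s}(h^{(s)}_n+h^{(s)}_{n-1})$ contributes $u_n\sum_{s=0}^{i}(h^{(s)}_{n+1}+h^{(s)}_n)(h^{(i-s)}_n+h^{(i-s)}_{n-1})$; on the right-hand side, writing $\alpha y_n^2-\tfrac{\alpha}{4}=\alpha\bigl(y_n^2-\tfrac14\bigr)$ and using $h^{(0)}_n=\tfrac12$ to kill the would-be $\alpha^{+1}$ term, the coefficient of $\alpha^{-i}$ equals $\sum_{s=0}^{i+1}h^{(s)}_nh^{(i+1-s)}_n=h^{(i+1)}_n+\sum_{s=1}^{i}h^{(s)}_nh^{(i+1-s)}_n$. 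Solving for $h^{(i+1)}_n$ gives exactly (\ref{hrec}). In particular (\ref{hrec}), together with $h^{(0)}_n=\tfrac12$, has a unique formal-series solution, and a short induction on $i$ through the recurrence shows that each $h^{(i)}_n$ is a polynomial of weighted degree $i$ whose arguments, tracked term by term in (\ref{hrec}) (and symmetric under reversal, since (\ref{uyhom}) is invariant under the reflection $n\mapsto N-n$), are precisely $u_{n-i+1},\dots,u_{n+i-1}$.

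For the second stage I would use the generating-function interpretation (\ref{tseries}). By Definition \ref{def:neg} and the derivation of (\ref{uy}) in Section \ref{s:neg}, the flow $u_{n,\xi}=u_n(y_{n+1}-y_{n-1})$ with $y_n$ solving (\ref{uy}) for the particular constants $\beta=0$, $\gamma=-\tfrac{\alpha}{4}$ is the negative flow (\ref{Rneg}); expanding $(R-\alpha)^{-1}$ in the geometric series in $R/\alpha$ and using (\ref{Di}) gives $u_n(T-T^{-1})y_n=\sum_{i\ge1}\alpha^{-i}u_n(T-T^{-1})h^{(i)}_n$, so $y_n$ and $\tfrac12+\sum_{i\ge1}\alpha^{-i}h^{(i)}_n$ differ by an element of $\ker(T-T^{-1})$, i.e.\ a ``blinking constant'' $c+(-1)^nd$. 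Matching the two leading powers of $\alpha$ in (\ref{uyhom}) forces $(c\pm d)^2=\tfrac14$, hence $d=0$, and then the $\alpha^{0}$-term together with $h^{(1)}_n=u_n$ (the density of $u_{n,x}$) gives $c=\tfrac12$; thus the series coefficients are exactly the densities $h^{(i)}_n$ of (\ref{Di}), and since these were just shown to be polynomials of degree $i$ in $u_{n-i+1},\dots,u_{n+i-1}$, the locality of the flows (\ref{Di}) drops out as a corollary.

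The bookkeeping in the first stage is routine but needs care — keeping the shifts $T^{\pm1}$ and the index shift from $i$ to $i+1$ straight when extracting (\ref{hrec}), and checking the variable-range and reversal-symmetry claims by induction. The one genuinely conceptual point is the second stage: one must make the expansion of $(R-\alpha)^{-1}$ and the two successive ``integrations'' that produced (\ref{uy}) compatible at the level of formal power series in $\alpha^{-1}$, so that the special constants $\beta=0$, $\gamma=-\tfrac{\alpha}{4}$ are recognized as the homogeneous (zero-integration-constant) normalization of the $h^{(i)}_n$; granting (\ref{tseries}), this is immediate.
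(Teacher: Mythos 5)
Your proof is correct and takes essentially the approach the paper intends: the paper states this proposition without a formal proof, relying on the generating-function interpretation (\ref{tseries}), and your two stages — the Cauchy-product extraction of (\ref{hrec}) from (\ref{uyhom}) with $h^{(0)}_n=\tfrac12$ cancelling the $\alpha^{+1}$ term, followed by the identification of the coefficients with the densities of (\ref{Di}) modulo $\ker(T-T^{-1})$ — are precisely the details that argument requires. The only points worth tightening are that the blinking-constant ambiguity must be fixed at every order of $\alpha^{-1}$ (not just the leading one) and that the degenerate branch $c=0$, $d=\pm\tfrac12$ (giving $y_n=(-1)^n/2$ and the zero flow) should be explicitly discarded; both follow immediately from the uniqueness of the formal solution of (\ref{hrec}).
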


The parameters $\beta$ and $\gamma$, which enter the defining equation (\ref{uy}) as integration constants, are less important than $\alpha$. Notice, that equations (\ref{neg})--(\ref{Diy}) keep their form under the change $y_n\mapsto y_n+(-1)^nc$ with constant $c$. If $\alpha\ne0$, this makes possible to set $\beta=0$, while the parameter $\gamma$ can be rescaled, as is done is equations (\ref{uyhom}). If $\alpha=0$ and $\beta\ne0$, then it is possible to set $\gamma=0$. Thus, all negative symmetries belong to one of three types:
\[
 \alpha\ne0,~~\beta=0;\qquad \alpha=\gamma=0,~~\beta\ne0;\qquad \alpha=\beta=0,~~\gamma\ne0,
\]
moreover, the last two types contain actually one equation each, up to the scaling.  

Since negative symmetries form a family, more complicated flows can be constructed by adding derivations $u_{n,\xi^j}= u_n(y^j_{n+1}-y^j_{n-1})$ where, for each $j$, the variables $y^j_n$ satisfy equations (\ref{uy}) with parameters $\alpha^j$, $\beta^j$ and $\gamma^j$. We use this possibility in the next section when constructing reductions of the Volterra hierarchy. In such combined flows, one can assume that $\alpha^i\ne\alpha^j$ for $i\ne j$, since the flows corresponding to one value of $\alpha$ form a linear space. Indeed, let us consider a linear combination of such flows:
\[
 u_{n,\xi}=c_1u_{n,\xi^i}+c_2u_{n,\xi^j}=u_n(y_{n+1}-y_{n-1}),\quad y_n=c_1y^i_n+c_2y^j_n;
\] 
since $y^i_n$ and $y^j_n$ satisfy equation (\ref{uy'}) with the same $\alpha=\alpha^i=\alpha^j$ ($\beta^i$ and $\beta^j$ may be different), hence $y_n$ also satisfies equation (\ref{uy'}) with the same $\alpha$ and multiplying by the integrating factor $y_{n+1}+y_n$ brings to equation of the form (\ref{uy}). 

\begin{remark}
The case $\alpha=\beta=0$ was studied in the literature, in different notations. In this case we do not need the change $z_n=y_n+y_{n-1}+1$ applied after equation (\ref{uz}). Let us set $\gamma=1$, without loss of generality, then we have the following equations for $z_n$:
\[
 u_{n,\xi}=u_n(z_{n+1}-z_n),\quad u_n(z_{n+1}-1)(z_n-1)=1.
\]
The change $z_n-1=1/p_{n-1}$ brings this to the flow
\[
 u_{n,\xi}=p_{n-1}-p_n,\quad u_n=p_{n-1}p_n
\]
introduced in the paper \cite{Pritula_Vekslerchik_2003}, see also \cite{Hu_Xue_2003, Zhu_Tam_2004}.
\end{remark}

\begin{remark}
Equation (\ref{uy}) can be solved with respect to $u_n$, which makes possible to write (\ref{yx}) as the lattice equation for the variables $y_n$ only:
\begin{equation}\label{V-CD}
 y_{n,x} = -(\alpha y^2_n+(-1)^n\beta y_n+\gamma)\left(\frac{1}{y_{n+1}+y_n}-\frac{1}{y_n+y_{n-1}}\right).
\end{equation}
Similarly, (\ref{yt}) is rewritten as a higher symmetry of this equation. For $\beta=0$ (which can be achieved in the generic case $\alpha\ne0$, as shown above), equation (\ref{V-CD}) is a special case of the V$_2$ type of the Yamilov classification \cite{Yamilov_1983, Yamilov_2006, Levi_Winternitz_Yamilov_2022}, and if $\beta\ne0$ then the additional change $y_n\mapsto(-1)^ny_n$ brings it to a special case of the V$_3$ type. For $\beta=0$, let us set $\alpha=-4a^2$ and $\gamma=c^2$, then equation (\ref{uy}) is equivalent to a composition of two discrete substitutions of the Miura type
\begin{equation}\label{Miura}
 u_n=(f_{n+1}-a)(f_n+a),\quad f_n=\frac{ay_n-ay_{n-1}+c}{y_n+y_{n-1}},
\end{equation}
moreover, the intermediate variable $f_n$ satisfies the modified Volterra lattice (type V$_1$ of the Yamilov classification, as well as the Volterra lattice itself)
\[
 f_{n,x}=(f^2_n-a^2)(f_{n+1}-f_{n-1}).
\]
Thus, the relation (\ref{uy}) between $u_n$ and $y_n$ is quite well known in the theory of the Volterra type lattices. The only new (to the author's knowledge) assertion is that the extension of the set of dynamical variables $u_n$ with the help of this relation allows one to extend the symmetry algebra by the flow (\ref{neg}).
\end{remark}

\begin{remark}
For comparison, we write down formulae for the negative symmetry of the KdV equation
\[
 u_t=u_{xxx}-6uu_x.
\]
Comparing to the Volterra lattice, here the notation of independent variables is changed: $t$ plays the role of $x$ and $x$ plays the role of $n$. The analogs of equations (\ref{neg}), (\ref{uy}) and (\ref{yx}) are, respectively, of the form
\[
 u_\xi=y_x,\qquad 
 y_{xx}=\frac{y^2_x-c^2}{2y}+2(u-\alpha)y,\qquad
 y_t=y_{xxx}-6uy_x.
\]
These equations are consistent, that is, satisfy the identities $(u_t)_\xi=(u_\xi)_t$ and $(y_{xx})_t=(y_t)_{xx}$. It is also possible to define the rules for derivatives of $y$ in virtue of higher symmetries of KdV. The analog of the lattice (\ref{V-CD}) is obtained by elimitation of $u$ from the equation for $y_t$:
\[
 y_t=y_{xxx}-\frac{3y_xy_{xx}}{y}+\frac{3y_x(y^2_x-c^2)}{2y^2}-6\alpha y_x,
\]
which is the rational form of the degenerate Calogero--Degasperis equation \cite{Calogero_Degasperis_1981, Fokas_1980}. The equation for $y_{xx}$, solved with respect to $u$, defines the differential substitution from this equation to KdV. It is decomposed into two Miura type substitutions, with the intermediate variable satisfying the modified KdV equation:
\[
 u=f_x+f^2+\alpha,\qquad f=\frac{y_x+c}{2y},\qquad f_t=f_{xxx}-6(f^2+\alpha)f_x.
\]
These negative symmetries can be used to construct multifield Painlev\'e equations which define the KdV reductions \cite{Adler_Kolesnikov_2023}.
\end{remark}

\begin{remark}
The passage from the lattice equations (\ref{i.ux}) and (\ref{i.ut}) to the Levi system (\ref{Levi}) is easily generalized for the negative symmetry. Indeed, equation (\ref{uy}) makes possible to express all $y_{n\pm k}$ in terms of $y_n=p$, $y_{n+1}=q$ and $u_n=u$, $u_{n+1}=v$ at two arbitrary neighboring lattice sites. By expressing  $y_{n-1}$ and $y_{n+2}$ in terms of these variables and denoting $\hat\beta=(-1)^n\beta$, we arrive to the hyperbolic system
\begin{equation}\label{Levi-neg}
 u_\xi= p_x=  u(p+q)-\frac{\alpha p^2+\hat\beta p+\gamma}{p+q},\quad
 v_\xi= q_x= -v(p+q)+\frac{\alpha q^2-\hat\beta q+\gamma}{p+q},
\end{equation}
which defines the negative symmetry directly for the Levi system.
\end{remark}

To conclude this section, we present zero curvature representations for the equations under study, restricting ourselves, as before, to explicit formulae for the flows $D_x$ and $D_t$. These representations are defined as the compatibility conditions for auxiliary linear equations with matrix coefficients
\begin{equation}\label{Psi-eq1}
 \Psi_{n+1}=L_n\Psi_n,\quad
 \Psi_{n,t_i}=U^{(i)}_n\Psi_n,\quad
 \Psi_{n,\xi}=Y_n\Psi_n.
\end{equation}
The compatibility condition for the shift of $n$ and derivations with respect to $t_i$ are
\begin{equation}\label{LU}
 L_{n,t_i}=U^{(i)}_{n+1}L_n-L_nU^{(i)}_n,\quad U^{(i)}_{n,t_j}-U^{(j)}_{n,t_i}=[U^{(j)}_n,U^{(i)}_n].
\end{equation}
In particular,  the lattice (\ref{ux}) and its symmetry (\ref{ut}) are equivalent to the first equation (\ref{LU}) for $i=1,2$, with the matrices of the form
\begin{gather}
\label{LU1}
 L_n=\begin{pmatrix}
  1 & -u_n/\lambda\\ 
  1 & 0
 \end{pmatrix},\quad 
 U^{(1)}_n=\begin{pmatrix}
  u_n & -u_n\\ 
  \lambda & u_{n-1}-\lambda
 \end{pmatrix},\\
\label{U2}
 U^{(2)}_n=\begin{pmatrix}
  u_n(\lambda+u_{n+1}+u_n+u_{n-1}) & -u_n(\lambda+u_{n+1}+u_n)\\ 
  \lambda(\lambda+u_n+u_{n-1}) & -\lambda^2-\lambda u_n+u_{n-1}(u_n+u_{n-1}+u_{n-2})
 \end{pmatrix},
\end{gather}
while the second equation (\ref{LU}) is fulfilled identically. One can check that the negative symmetry corresponds to the matrix
\begin{equation}\label{Yneg}
 Y_n= 
 \frac{1}{\lambda-\alpha}\begin{pmatrix}
  -\alpha y_n-(-1)^n\frac{\beta}{2} & 
  \dfrac{\alpha y^2_n+(-1)^n\beta y_n+\gamma}{y_n+y_{n-1}} \\
  -\lambda(y_n+y_{n-1}) & 
   \lambda(y_n+y_{n-1})-\alpha y_{n-1}+(-1)^n\frac{\beta}{2}
  \end{pmatrix},
\end{equation}
which participate in the remaining compatibility conditions for (\ref{Psi-eq1}):
\begin{equation}\label{LUY}
 L_{n,\xi}=Y_{n+1}L_n-L_nY_n,\quad 
 Y_{n,t_i}=U^{(i)}_{n,\xi}+[U^{(i)}_n,Y_n].
\end{equation}

\begin{proposition}\label{pr:zcr}
For the negative symmetry, equations (\ref{neg}) and (\ref{uy}) are equivalent to the first equations (\ref{LUY}); equations (\ref{yx}) and (\ref{yt}), which define the extensions of derivations $D_x$ and $D_t$ on the variables $y_n$, are equivalent to the second equation (\ref{LUY}) for $i=1,2$.
\end{proposition}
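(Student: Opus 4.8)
The proposition is proved by direct verification: one substitutes the explicit matrices (\ref{LU1}), (\ref{U2}), (\ref{Yneg}) into the compatibility relations (\ref{LUY}), clears the denominator $\lambda-\alpha$ carried by $Y_n$ (and, for the first relation, also the $\lambda^{-1}$ coming from $L_n$), and compares the resulting rational identities in $\lambda$ entry by entry and power by power. For definiteness I take $\alpha\ne0$, so that the pole of $Y_n$ at $\lambda=\alpha$ does not collide with the pole of $L_n$ at $\lambda=0$; the degenerate cases require only obvious modifications of the pole bookkeeping.

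Consider first $L_{n,\xi}=Y_{n+1}L_n-L_nY_n$ and set $\tilde Y_n=(\lambda-\alpha)Y_n$, which is affine in $\lambda$. Since the left-hand side is regular at $\lambda=\alpha$, the residue of the right-hand side there must vanish, which is the matrix relation $\tilde Y_{n+1}(\alpha)L_n(\alpha)=L_n(\alpha)\tilde Y_n(\alpha)$. One checks that $\tilde Y_n(\alpha)$ is traceless with $\det\tilde Y_n(\alpha)=\alpha\gamma-\tfrac14\beta^2$ independent of $n$, so this relation is a similarity between matrices of fixed trace and determinant; reading off one of its entries reproduces exactly equation (\ref{uy}) at the site $n$, and the remaining entries then follow from it. Subtracting this pole part leaves an identity regular in $\lambda$ whose only nontrivial coefficient — that of $\lambda^{-1}$ in the $(1,2)$ entry — gives $u_{n,\xi}=u_n(y_{n+1}-y_{n-1})$, i.e. (\ref{neg}). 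Conversely (\ref{uy}) and (\ref{neg}) obviously imply the matrix relation, so the two formulations are equivalent.

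For the second relation, multiply $Y_{n,t_i}=U^{(i)}_{n,\xi}+[U^{(i)}_n,Y_n]$ by $\lambda-\alpha$ to obtain the polynomial identity $\tilde Y_{n,t_i}=(\lambda-\alpha)U^{(i)}_{n,\xi}+[U^{(i)}_n,\tilde Y_n]$, of degree $i+1$ in $\lambda$. On the left $\tilde Y_n$ is differentiated using (\ref{ux}) (for $i=1$) or (\ref{ut}) (for $i=2$) together with the unknown rule for $y_{n,t_i}$; on the right $U^{(i)}_n$ depends only on the $u$-variables, so $U^{(i)}_{n,\xi}$ is computed from (\ref{neg}). Comparing coefficients of $\lambda$ entrywise, the top coefficients and the diagonal entries are satisfied in virtue of (\ref{ux})/(\ref{ut}), (\ref{neg}) and (\ref{uy}), while the remaining entries — after eliminating the products $u_{n\pm k}y_{n\pm l}$ by means of the shifted copies of (\ref{uy}) and of its once-integrated form (\ref{uy'}) — collapse to exactly (\ref{yx}) for $i=1$ and to (\ref{yt}) for $i=2$. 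Running the computation in reverse shows these prolongation rules are also sufficient.

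The only real difficulty is the bulk of the computation for $i=2$: the matrix $U^{(2)}_n$ is quadratic in $\lambda$ and cubic in the lattice variables, the flow (\ref{yt}) reaches the sites $n\pm2$, and the off-diagonal entries simplify only after repeated use of (\ref{uy}) at the sites $n-2,\dots,n+1$. The one place where genuine care is needed is the blinking constant $(-1)^n\beta$, which enters $Y_n$ through the terms $\pm\tfrac12(-1)^n\beta$ on the diagonal and changes sign under $T$; keeping its parity straight in every shifted copy of (\ref{uy}) is essential. Otherwise the verification is as mechanical as the checks (\ref{Fx})--(\ref{Ft}) in the proof of Proposition \ref{pr:neg} and is most safely carried out with computer algebra.
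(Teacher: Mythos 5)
The paper states Proposition \ref{pr:zcr} without any proof, treating it as a direct verification, and your proposal is exactly that verification, correctly organized via the partial-fraction decomposition of $Y_n$: I confirmed that $\tilde Y_n(\alpha)$ is traceless with $\det\tilde Y_n(\alpha)=\alpha\gamma-\tfrac14\beta^2$, that the residue of $L_{n,\xi}=Y_{n+1}L_n-L_nY_n$ at $\lambda=\alpha$ reduces (through its $(2,2)$ entry, the others being identities or consequences) to (\ref{uy}), and that the leftover $\lambda^{-1}$ coefficient of the $(1,2)$ entry gives precisely (\ref{neg}). Your handling of the second relation in (\ref{LUY}) is the standard entrywise coefficient comparison, so the argument is sound and coincides with the paper's (omitted) direct-calculation approach.
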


The calculation of the $U^{(i)}_n$ matrices corresponding to higher symmetries (\ref{Di}) amounts to recurrent relations which are equivalent to the recursion operator (\ref{R}). This calculation turns out to be quite simple by using the expansion established in the Proposition \ref{pr:negR}.

\begin{proposition}\label{pr:U}
The matrices $U^{(i)}_n$ from equations (\ref{Psi-eq1}) are of the form
\begin{equation}\label{Ui}
 U^{(i)}_n=\lambda^{i-1}H^{(0)}_n+\lambda^{i-2}H^{(1)}_n+\dots+H^{(i-1)}_n,\quad i=1,2,\dots
\end{equation}
where $H^{(i)}_n$ are the matrices (also depending on $\lambda$)
\begin{equation}\label{Hi}
 H^{(i)}_n=
 \begin{pmatrix}
   h^{(i+1)}_n & -u_n\bigl(h^{(i)}_{n+1}+h^{(i)}_n\bigr) \\
   \lambda\bigl(h^{(i)}_n+h^{(i)}_{n-1}\bigr) & -\lambda\bigl(h^{(i)}_n+h^{(i)}_{n-1}\bigr)+h^{(i+1)}_{n-1}
 \end{pmatrix}
\end{equation}
with the polynomials $h^{(i)}_n$ defined by relations (\ref{hrec}).
\end{proposition}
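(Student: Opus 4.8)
The plan is to extract the formula from the zero-curvature matrix $Y_n$ of the negative symmetry, expanding it with the help of the generating series of Proposition~\ref{pr:negR} rather than iterating the recursion operator. First I would pass to the homogeneous normalization $\beta=0$, $\gamma=-\alpha/4$, so that the defining equation (\ref{uy}) becomes (\ref{uyhom}), which by Proposition~\ref{pr:negR} is solved by the formal series $y_n=\tfrac12+\sum_{k\ge1}\alpha^{-k}h^{(k)}_n$. With this choice the $(1,2)$-entry of (\ref{Yneg}), namely $\bigl(\alpha y_n^2-\alpha/4\bigr)/(y_n+y_{n-1})$, reduces by (\ref{uyhom}) to $u_n(y_{n+1}+y_n)$, so
\[
 Y_n=\frac{1}{\lambda-\alpha}\begin{pmatrix}
  -\alpha y_n & u_n(y_{n+1}+y_n)\\
  -\lambda(y_n+y_{n-1}) & \lambda(y_n+y_{n-1})-\alpha y_{n-1}
 \end{pmatrix}.
\]
By Proposition~\ref{pr:zcr} this $Y_n$ satisfies the first equation in (\ref{LUY}); and since $y_n$ is the above series, the flow (\ref{neg}) it represents is $u_{n,\xi}=u_n(y_{n+1}-y_{n-1})=\sum_{i\ge1}\alpha^{-i}u_n(T-T^{-1})(h^{(i)}_n)=\sum_{i\ge1}\alpha^{-i}u_{n,t_i}$, i.e. the expansion (\ref{tseries}).

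The key step is the matrix identity
\[
 \sum_{i\ge1}\alpha^{-i}U^{(i)}_n=Y_n+\frac{\alpha/2}{\lambda-\alpha}\,I,
\]
with $U^{(i)}_n$ on the left given by (\ref{Ui})--(\ref{Hi}). To verify it I would substitute (\ref{Hi}) into (\ref{Ui}), interchange the two summations, and use the formal geometric sum $\sum_{i\ge k+1}\alpha^{-i}\lambda^{\,i-1-k}=-\alpha^{-k}/(\lambda-\alpha)$; each entry of the left-hand side then collapses to $(\lambda-\alpha)^{-1}$ times an expression affine in $\lambda$ whose coefficients are the series $\sum_{k\ge0}\alpha^{-k}h^{(k)}_n$ or $\sum_{k\ge0}\alpha^{-k}h^{(k+1)}_n$. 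Resumming these by means of $\sum_{k\ge0}\alpha^{-k}h^{(k)}_n=y_n$ and $\sum_{k\ge0}\alpha^{-k}h^{(k+1)}_n=\alpha\bigl(y_n-\tfrac12\bigr)$ reproduces the off-diagonal entries of $Y_n$ verbatim and the diagonal ones up to an added $\tfrac{\alpha/2}{\lambda-\alpha}$, which is the displayed identity. As a check, the cases $i=1$ and $i=2$ of (\ref{Ui})--(\ref{Hi}) reproduce precisely the matrices (\ref{LU1}) and (\ref{U2}).

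To conclude I would add the scalar matrix $\tfrac{\alpha/2}{\lambda-\alpha}I$ to both sides of the first equation in (\ref{LUY}); since its coefficient is independent of $n$ and scalars commute with $L_n$, the term $Y_{n+1}L_n-L_nY_n$ is unchanged, so $\widetilde Y_n:=\sum_{i\ge1}\alpha^{-i}U^{(i)}_n$ satisfies $L_{n,\xi}=\widetilde Y_{n+1}L_n-L_n\widetilde Y_n$. On the other hand $L_n$ depends only on $u_n$ and $\lambda$, hence $L_{n,\xi}=\sum_{i\ge1}\alpha^{-i}L_{n,t_i}$ by (\ref{tseries}), while $\widetilde Y_{n+1}L_n-L_n\widetilde Y_n=\sum_{i\ge1}\alpha^{-i}\bigl(U^{(i)}_{n+1}L_n-L_nU^{(i)}_n\bigr)$. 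Comparing coefficients of $\alpha^{-i}$ (legitimate as an identity of formal series in $\alpha^{-1}$) gives $L_{n,t_i}=U^{(i)}_{n+1}L_n-L_nU^{(i)}_n$ for every $i$, i.e. the matrices (\ref{Ui})--(\ref{Hi}) realize the first compatibility relation in (\ref{LU}) for the flow (\ref{Di}); the second relation in (\ref{LU}) is obtained in the same way from the second equation in (\ref{LUY}). Since (\ref{Ui})--(\ref{Hi}) are manifestly polynomial in $\lambda$, this identifies them with the matrices $U^{(i)}_n$ of (\ref{Psi-eq1}), proving the proposition.

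The hard part will be purely the bookkeeping: carrying out the entrywise resummation cleanly and keeping track of the diagonal correction $\tfrac{\alpha/2}{\lambda-\alpha}I$ (equivalently, the role of the constant term $h^{(0)}_n=\tfrac12$ in (\ref{hseries})), and being explicit about why the coefficients of $\alpha^{-i}$ may be matched. There is no conceptual obstacle once Propositions~\ref{pr:negR} and~\ref{pr:zcr} are available; in particular the recursion operator (\ref{R}) need not be manipulated directly. One could alternatively verify $L_{n,t_i}=U^{(i)}_{n+1}L_n-L_nU^{(i)}_n$ by induction on $i$ using the recurrence (\ref{hrec}), but the generating-function argument is shorter.
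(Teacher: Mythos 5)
Your proposal is correct and follows essentially the same route as the paper: normalize to the homogeneous equation (\ref{uyhom}), rewrite the $(1,2)$-entry of $Y_n$ via that equation, expand $Y_n$ as a series in $\alpha$ using (\ref{hseries}), and identify the result with $\sum_i\alpha^{-i}U^{(i)}_n$ up to the scalar term $\tfrac{\alpha}{2(\lambda-\alpha)}I$, which drops out of the zero-curvature relations. You merely spell out in more detail the resummation and the final coefficient-matching step that the paper leaves implicit.
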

\begin{proof}
We use the interpretation of the negative symmetry as the generating function for the flows $D_i$. Let $y_n$ be the formal solution (\ref{hseries}) of equation (\ref{uyhom}) with $\beta=0$ and $\gamma=-\alpha/4$. Changing the entry $(1,2)$ of the matrix (\ref{Yneg}) in virtue of this equation yields
\[
 Y_n= -\frac{1}{\lambda-\alpha}\begin{pmatrix}
  \alpha y_n & -u_n(y_{n+1}+y_n) \\
  \lambda(y_n+y_{n-1}) & -\lambda(y_n+y_{n-1})+\alpha y_{n-1}
 \end{pmatrix}.
\] 
Expanding this in a series in $\alpha$, we obtain, using the notation (\ref{Hi}),
\[
 Y_n=-\frac{1}{\lambda-\alpha}\left(
  \frac{\alpha}{2}I+H^{(0)}_n+\frac{1}{\alpha}H^{(1)}_n+\frac{1}{\alpha^2}H^{(2)}_n+\dots\right)
\]
which is equivalent to
\[
 Y_n+\frac{\alpha}{2(\lambda-\alpha)}I = \frac{1}{\alpha}U^{(1)}_n+\frac{1}{\alpha^2}U^{(2)}_n+\dotso.  
\]
The term with the unit matrix $I$ plays no role, since it is canceled in equations (\ref{LUY}), and the comparison with (\ref{tseries}) completes the proof.
\end{proof}

\section{Construction of the non-autonomous reductions}\label{s:nar}

Recall, that if the equation $u_{n,\eta}=g_n$ is an arbitrary symmetry of the lattice equation (\ref{ux}) then the stationary equation $g_n=0$ defines a constraint consistent with this equation. Indeed, the definition of a symmetry implies
\[
 D_x(g_n)=(u_n(u_{n+1}-u_{n-1}))_\eta= u_n(g_{n+1}-g_{n-1})+(u_{n+1}-u_{n-1})g_n,
\] 
and this vanishes in virtue of equations $g_n=0$. In particular, the stationary equation for a linear combination of flows $D_i$ (\ref{Di}) leads to constraints of the form
\[
 P(R)(u_{n,x})=0
\]
where $P(R)$ is a polynomial with constant coefficients on the recursion operator $R$. If the degree of $P$ is equal to $r-1$, then this constraint involves the first $r$ flows $D_i$ and leads to a $(2r-1)$-point difference equation of the form
\[
 \mu_rh^{(r)}_n+\mu_{r-1}h^{(r-1)}_n+\dots+\mu_1h^{(1)}_n = \delta-(-1)^n\varepsilon,
\]
where the integration constants on the right hand side belong to the kernel of the operator $T-T^{-1}$. Such equations serve as a discrete analogue of the Novikov equations from the KdV theory and define finite-gap solutions of the Volterra lattice.

One can add negative symmetries to the linear combination, but this gives nothing new, since stationary equations of the form
\begin{equation}\label{constr1}
 \bigl(\widetilde P(R) +\nu_1(R-\alpha^1)^{-1} +\dots +\nu_m(R-\alpha^m)^{-1}\bigr)(u_{n,x}) =0,
\end{equation}
where $\nu_j$ and $\alpha^j$ are arbitrary constants, can be brought to the form $P(R)(u_{n,x})=0$ by applying the operator product of $R-\alpha^j$.

The situation becomes different for the Painlev\'e type reductions that include symmetries from the additional subalgebra. These symmetries are generated by the recursion operator from the classical scaling symmetry
\begin{equation}\label{scaling}
 u_{n,\tau}= u_n+\sum it_iu_{n,t_i}=u_n(T-T^{-1})\left(\frac{n}{2}+\sum it_ih^{(i)}_n\right)
\end{equation}
where the polynomials $h^{(i)}_n$ are assumed to be homogeneous of degree $i$. In this formula, the sum is taken over an arbitrary finite subset of the flows $D_i$, for which one wish to keep the commutativity property. With some inaccuracy in the notation, we use independent variable $\tau$ for the flow corresponding to any such subset. For example, if we are only interested in the derivatives (\ref{ux}) with respect to $x=t_1$ and (\ref{ut}) with respect to $t=t_2$, then the scaling symmetry is defined as
\begin{equation}\label{scalingxt}
\begin{aligned}
 u_{n,\tau}&= u_n+xu_{n,x}+2tu_{n,t}\\
  &= u_n(T-T^{-1})\Bigl(2tu_n(u_{n+1}+u_n+u_{n-1})+xu_n+\frac{n}{2}\Bigr).
\end{aligned}
\end{equation}
The most general constraints that can be obtained by use of the full symmetry algebra are stationary equations for linear combinations of the flows (\ref{Di}) and the flows
\begin{equation}\label{tDi}
 u_{n,\tau_i}= R^{i-1}(u_{n,\tau}),\quad i=1,2,\dotsc,
\end{equation}
that is, equations of the form
\begin{equation}\label{constr2}
 P(R)(u_{n,x})+Q(R)(u_{n,\tau})=0
\end{equation}
with polynomials $P$ and $Q$. The problem, however, is that the additional flows (\ref{tDi}) are more complicated than (\ref{Di}). The only local flows are the scaling symmetry (\ref{scaling}) itself and the next flow corresponding to the master-symmetry:
\[
 u_{n,\tau_2}= u_n((n+3)u_{n+1}+u_n-nu_{n-1})+\sum it_iu_{n,t_{i+1}}.
\]
It is easy to see that the right hand side of the latter equation does not lie in the image of $u_n(T-1)$, therefore the flow $u_{n,\tau_3}$ is not local. More generally, it can be shown that further application of $R$ leads to the need to introduce a new nonlocality at each step, which makes it difficult to use these symmetries. The main idea of this paper is that it might be more convenient to bring equation (\ref{constr2}) to the form
\[
 P(R)Q^{-1}(R)(u_{n,x})+u_{n,\tau}=0.
\]
Here the first term can be expanded into simple fractions, and in the case of general position (polynomial $Q$ without multiple roots), we arrive to a constraint of the form
\begin{equation}\label{constr3}
 \bigl(\widetilde P(R) +\nu_1(R-\alpha^1)^{-1} +\dots +\nu_m(R-\alpha^m)^{-1}\bigr)(u_{n,x})+u_{n,\tau}=0,
\end{equation}
which differs from (\ref{constr1}) by adding the scaling term, that is, the simplest non-autonomous symmetry. Of course, equation (\ref{constr3}) is nonlocal to the same extent as (\ref{constr2}), but in this version all nonlocalities are moved to terms corresponding to the negative symmetries and therefore they can be defined in already familiar and uniform manner. In (\ref{constr3}), all terms lie in the image $u_n(T-T^{-1})$ and after integration we obtain an equation of the form
\[
 \sum^{r_1}_{i=1}\mu_ih^{(i)}_n + \sum^{r_2}_{i=1}it_ih^{(i)}_n 
  +\frac{1}{2}(n-\delta+(-1)^n\varepsilon) +\nu_1y^1_n+\dots+\nu_my^m_n =0.
\]
Moreover, some further simplifications are possible. First, the coefficients $\mu_i$ and $t_i$ differ only in that we consider the former to be constant and the latter to be variable. However, this is only a matter of interpretation. Therefore, both sums can be combined by choosing $r_2\ge r_1$ and replacing $t_i\mapsto t_i-\mu_i/i$. In the case when we are not interested in the dependence on $t_i$, then it should simply be considered as fixed parameter and differentiation with respect to it should not be considered. Secondly, all coefficients $\nu_j$ can be changed arbitrarily, since multiplication of $y^j$ by a constant only leads to a change of parameters $\beta^j$ and $\gamma^j$ in the equation for this variable.

As a result, we arrive at the following wide family of the constraints:
\begin{equation}\label{nar}
\left\{\begin{aligned}
 & u_n(y^j_{n+1}+y^j_n)(y^j_n+y^j_{n-1})=\alpha^j(y^j_n)^2+(-1)^n\beta^jy^j_n+\gamma^j,\qquad j=1,\dots,m,\\
 & 2\sum^r_{i=1}it_ih^{(i)}_n + n-\delta+(-1)^n\varepsilon=y^1_n+\dots+y^m_n
\end{aligned}\right.
\end{equation}
where $h^{(i)}_n$ are homogeneous polynomials on $u_{n-i+1},\dots,u_{n+i-1}$ defined by recurrent relations (\ref{hrec}). As it was noticed in the previous section, we can assume that $\alpha^i\ne\alpha^j$ for $i\ne j$ and $\beta^j=0$ if $\alpha^j\ne0$ (the cancellation of $\beta^j$ by the changes $y^j_n\mapsto y^j_n-(-1)^n\beta^j/(2\alpha^j)$ results only in the change of the parameter $\varepsilon$). 

Thus, equations (\ref{nar}) define the stationary equation $u_{n,\eta}=0$ for the sum of the scaling and negative symmetries
\begin{equation}\label{ueta}
 u_{n,\eta}=2u_{n,\tau}-u_{n,\xi^1}-\dots-u_{n,\xi^m}.
\end{equation}
By construction, the following property is fulfilled.

\begin{proposition}\label{pr:nar}
The system (\ref{nar}) is consistent with the derivations with respect to the parameters $t_1,\dots,t_r$, defined by equations
\begin{equation}\label{narDi}
 u_{n,t_i}=u_n(h^{(i)}_{n+1}-h^{(i)}_{n-1}),\quad
 y^j_{n,t_i}=h^{(i)}_{n,\xi^j},\quad
 u_{n,\xi^j}=u_n(y^j_{n+1}-y^j_{n-1}).
\end{equation}
These derivations are mutually commutative.
\end{proposition}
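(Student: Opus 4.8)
The plan is to reduce everything to consistency facts that are already available from Section~\ref{s:neg}. The system (\ref{nar}) has two kinds of equations: the $m$ difference relations of type (\ref{uy}) for the pairs $(u_n,y^j_n)$, and the single scalar constraint tying the $y^j_n$ to the $h^{(i)}_n$. I would first observe that the derivations (\ref{narDi}) are, by definition, nothing but the restriction to the pair $(u_n,y^j_n)$ of the flows $D_i$ extended to the negative variable via the rule (\ref{Diy}), one copy for each $j$ (with its own constants $\alpha^j,\beta^j,\gamma^j$). Proposition~\ref{pr:neg} (for $D_x$, $D_t$) and, more generally, the identity (\ref{Diy}) together with (\ref{hh}), guarantee that each equation (\ref{uy}) with index $j$ is preserved by every $D_i$ and that the $D_i$ commute on the extended set $\{u_n\}\cup\{y^j_n\}$. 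So the first block of (\ref{nar}) is automatically compatible, and mutual commutativity of the $D_i$ holds on the full variable set before we even look at the constraint.

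The real point is the second equation of (\ref{nar}). Write $\Phi_n := 2\sum_{i=1}^r it_i h^{(i)}_n + n-\delta+(-1)^n\varepsilon - (y^1_n+\dots+y^m_n)$; we must show $\Phi_n=0$ is preserved by each $D_{t_k}$, i.e. that $D_{t_k}(\Phi_n)$ vanishes modulo $\{\Phi_m=0\}$ and the first block. Differentiating term by term: the explicit $t_k$-dependence contributes $2k h^{(k)}_n$ (from $\partial t_i/\partial t_k=\delta_{ik}$), the sum $\sum it_i h^{(i)}_n$ contributes $\sum it_i h^{(i)}_{n,t_k}$, the lattice term $n-\delta+(-1)^n\varepsilon$ is constant, and the $y^j$-terms contribute $-\sum_j y^j_{n,t_k} = -\sum_j h^{(k)}_{n,\xi^j}$ by (\ref{narDi}). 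Now I use the key structural fact: the total flow $D_\eta$ defined by (\ref{ueta}) is $2u_{n,\tau}-\sum_j u_{n,\xi^j}$, and by (\ref{scaling}) the scaling part satisfies $2u_{n,\tau}=2u_n+2\sum it_i u_{n,t_i}$ acting on $u_n$, while on the $h$-potential level $2\tau$ corresponds to the potential $2\sum it_i h^{(i)}_n + n$ (homogeneity of $h^{(i)}$ of degree $i$ is what makes this work). Hence $\Phi_n$ is, up to the constants $\delta,\varepsilon$ which lie in $\ker(T-T^{-1})$, precisely the potential whose $(T-T^{-1})$-image reproduces the right-hand side of (\ref{ueta}); thus $\Phi_n\equiv 0$ is exactly the stationary equation $u_{n,\eta}=0$, as already asserted in the paragraph preceding the Proposition. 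Because $D_\eta$ is a genuine symmetry (being a linear combination of symmetries $D_\tau$ and $D_{\xi^j}$), the general remark opening Section~\ref{s:nar} applies: $D_x(u_{n,\eta})=u_n(u_{n,\eta}\text{-shifted})$ vanishes on the constraint, and the same for every $D_{t_k}$ since each $D_{t_k}$ is obtained by applying $R$ and hence commutes with $D_\eta$. Concretely, I would verify $D_{t_k}(\Phi_n)=u_n^{-1}\big(\text{something}\big)$ that is a difference-operator image of $\Phi$'s; the cleanest route is: $\Phi_n$ is a potential for $u_{n,\eta}$, so $D_{t_k}(u_n(T-T^{-1})\Phi_n)=D_{t_k}(u_{n,\eta})$, and the commutativity $[D_{t_k},D_\eta]=0$ plus (\ref{ux}) forces $(T-T^{-1})D_{t_k}(\Phi_n)$ to be a linear combination of $\Phi_{n\pm j}$; injectivity of $T-T^{-1}$ on the relevant (non-blinking) space then gives $D_{t_k}(\Phi_n)\in\ker(T-T^{-1})$, and a blinking-constant count (the flows $D_{t_k}$ carry no $(-1)^n$ ambiguity, as noted after Proposition~\ref{pr:neg}) pins it to $0$.

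Finally, mutual commutativity of the derivations (\ref{narDi}) on the constrained manifold: on $u_n$ this is (\ref{hh})/$[D_i,D_j]=0$ for the Volterra hierarchy; on each $y^j_n$ it is the content of the remark after Proposition~\ref{pr:neg} (commutativity of the extended $D_i$ holds without using (\ref{uy}), and the blinking-constant obstruction vanishes for $\delta^{(i)}_n=0$, which is our case). So no new computation is needed here beyond citing those facts.

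The main obstacle is the bookkeeping in the second paragraph: making precise that $\Phi_n$ is the correct potential for $u_{n,\eta}$ requires tracking the integration constants (the $\delta$ and $(-1)^n\varepsilon$ terms, and the analogous freedom in each $y^j$) and checking that the $D_{t_k}$-variation of $\Phi_n$ truly lands in $\ker(T-T^{-1})$ rather than merely in the kernel up to a blinking term; the homogeneity assumption on the $h^{(i)}_n$ and the absence of $(-1)^n$-ambiguity in the $D_{t_k}$ (as opposed to the $\xi^j$) are exactly what close that gap. Everything else is either definitional or already proved in the excerpt.
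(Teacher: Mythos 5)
Your argument is correct and follows the same route the paper takes implicitly: the paper gives no separate proof of Proposition~\ref{pr:nar}, asserting it holds ``by construction'' because (\ref{nar}) is the integrated stationary equation for the symmetry (\ref{ueta}), which is precisely the identification you make via your potential $\Phi_n$ (together with Proposition~\ref{pr:neg} and (\ref{hh}) for the first block). The only soft spot is your final ``pinning to zero'' of the $\ker(T-T^{-1})$ ambiguity, which can be closed cleanly by computing
$D_{t_k}(\Phi_n)=2kh^{(k)}_n+2\sum_i it_i\,h^{(k)}_{n,t_i}-\sum_j h^{(k)}_{n,\xi^j}=h^{(k)}_{n,\eta}$
(using (\ref{hh}) to swap indices and Euler homogeneity of degree $k$ for the term $2kh^{(k)}_n$), which vanishes termwise once all $u_{m,\eta}=0$, so no residual kernel element ever arises.
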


It is easy to see that the system (\ref{nar}) has a total order of $2(r+m-1)$ with respect to the shifts of $n$, that is, it is equivalent to some non-autonomous (depending on $n$) mapping on $\mathbb{C}^{2(r+m-1)}$. Accordingly, the non-autonomous (depending on $t_i$) systems of ODEs that arise when the flows $D_1,\dots,D_r$ are restricted due to the constraint equations have the same order. Notice that the order drops by 2 on the hyperplane $t_r=0$ which is special for these flows (in the simplest case $r=m=1$ this leads to an explicit solution \cite{Adler_Shabat_2019}, see also \cite{Adler_2020b}, where such a reduction of dimension was considered for the KdV case).

Notice that the matrices $L_n$ and $U^{(i)}$ from the representations (\ref{LU}) are homogeneous assuming that the spectral parameter $\lambda$ has the same weight as $u_n$. This means that under the scaling, $\lambda$ is transformed in the same way as $u_n$, and at the level of linear problems we have the equation
\[
 \Psi_{n,\tau}+\lambda\Psi_{n,\lambda}=V_n\Psi_n,\quad V_n=\sum it_iU^{(i)}_n.
\]
From here, it follows that the scaling symmetry admits the zero curvature representation with the derivative with respect to the spectral parameter:
\[
 L_{n,\tau}+\lambda L_{n,\lambda}= V_{n+1}L_n-L_nV_n,\quad
 U^{(i)}_{n,\tau}+\lambda U^{(i)}_{n,\lambda}= V_{n,t_i}+[V_n,U^{(i)}_n].
\]
The isomonodromic Lax representation for our constraint is obtained by summation of the matrices corresponding to the flows involved in (\ref{ueta}).

\begin{proposition}\label{pr:nar-Lax}
The following matrix relations are fulfilled by virtue of the system (\ref{nar}) and derivations $D_i$ with respect to the parameters $t_1,\dots,t_r$:
\begin{equation}\label{LW}
 2\lambda L_{n,\lambda}=W_{n+1}L_n-L_nW_n,\quad
 W_{n,t_i}=2\lambda U_{n,\lambda}+[U^{(i)}_n,W_n]
\end{equation}
where
\begin{equation}\label{W}
 W_n=2\sum^r_{i=1}it_iU^{(i)}_n+\sum^m_{j=1}\frac{W^j_n}{\lambda-\alpha^j},
\end{equation}
\[
 W^j_n= 
 \begin{pmatrix}
  \alpha^jy^j_n+(-1)^n\frac{\beta^j}{2} & 
  -\dfrac{\alpha^j(y^j_n)^2+(-1)^n\beta^jy^j_n+\gamma^j}{y^j_n+y^j_{n-1}} \\
  \lambda(y^j_n+y^j_{n-1}) & 
  -\lambda(y^j_n+y^j_{n-1})+\alpha^jy^j_{n-1}-(-1)^n\frac{\beta^j}{2}
  \end{pmatrix},
\]
and the matrices $U^{(i)}_n$ are defined in the Proposition \ref{pr:U}.
\end{proposition}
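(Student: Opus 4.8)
The plan is to obtain (\ref{LW}) by adding up the zero-curvature representations of the flows that constitute $D_\eta$ in (\ref{ueta}) and then invoking the single scalar constraint $u_{n,\eta}=0$ built into (\ref{nar}). First I would collect the ingredients. From the paragraph preceding the proposition, the scaling symmetry carries the representation with a $\lambda$-derivative,
\[
 L_{n,\tau}+\lambda L_{n,\lambda}=V_{n+1}L_n-L_nV_n,\qquad
 U^{(i)}_{n,\tau}+\lambda U^{(i)}_{n,\lambda}=V_{n,t_i}+[V_n,U^{(i)}_n],\qquad V_n=\sum_i it_iU^{(i)}_n,
\]
and from (\ref{LUY}) each negative symmetry $D_{\xi^j}$ carries
\[
 L_{n,\xi^j}=Y^j_{n+1}L_n-L_nY^j_n,\qquad Y^j_{n,t_i}=U^{(i)}_{n,\xi^j}+[U^{(i)}_n,Y^j_n],
\]
with $Y^j_n$ the matrix (\ref{Yneg}) for the parameters $\alpha^j,\beta^j,\gamma^j$. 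Comparing (\ref{Yneg}) with the matrix $W^j_n$ of the statement shows $W^j_n=-(\lambda-\alpha^j)Y^j_n$, so that the matrix (\ref{W}) is precisely $W_n=2V_n-\sum_{j=1}^m Y^j_n$. Finally I would note that $L_n$ and the $U^{(i)}_n$ are built from the $u$-variables alone (through the polynomials $h^{(k)}_n$), so the combined derivation $D_\eta=2D_\tau-\sum_jD_{\xi^j}$ annihilates them as soon as $u_{n,\eta}=0$; that is, $2L_{n,\tau}-\sum_jL_{n,\xi^j}=L_{n,\eta}=0$ and $2U^{(i)}_{n,\tau}-\sum_jU^{(i)}_{n,\xi^j}=U^{(i)}_{n,\eta}=0$ on the system (\ref{nar}).

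With this in hand the two relations fall out by regrouping. Forming $2L_{n,\tau}+2\lambda L_{n,\lambda}-\sum_jL_{n,\xi^j}$ from the first relations gives
\[
 2L_{n,\tau}+2\lambda L_{n,\lambda}-\sum_jL_{n,\xi^j}
  =\Bigl(2V_{n+1}-\sum_jY^j_{n+1}\Bigr)L_n-L_n\Bigl(2V_n-\sum_jY^j_n\Bigr)=W_{n+1}L_n-L_nW_n,
\]
and dropping $2L_{n,\tau}-\sum_jL_{n,\xi^j}=0$ leaves the first equation of (\ref{LW}). For the second relation I would write $W_{n,t_i}=2V_{n,t_i}-\sum_jY^j_{n,t_i}$, substitute $V_{n,t_i}=U^{(i)}_{n,\tau}+\lambda U^{(i)}_{n,\lambda}-[V_n,U^{(i)}_n]$ and $Y^j_{n,t_i}=U^{(i)}_{n,\xi^j}+[U^{(i)}_n,Y^j_n]$, and regroup to
\[
 W_{n,t_i}=\Bigl(2U^{(i)}_{n,\tau}-\sum_jU^{(i)}_{n,\xi^j}\Bigr)+2\lambda U^{(i)}_{n,\lambda}+\Bigl[U^{(i)}_n,\ 2V_n-\sum_jY^j_n\Bigr];
\]
using $2U^{(i)}_{n,\tau}-\sum_jU^{(i)}_{n,\xi^j}=0$ and $2V_n-\sum_jY^j_n=W_n$ then gives the second equation of (\ref{LW}).

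The argument is therefore pure bookkeeping once the constituent representations are recalled, and I do not expect a genuine obstacle. The only points that need care are the sign and normalisation check identifying $W^j_n$ with $-(\lambda-\alpha^j)Y^j_n$, and the observation that $L_n$ and $U^{(i)}_n$ carry no dependence on the auxiliary variables $y^j_n$ --- it is exactly this which lets the single equation $u_{n,\eta}=0$ (the second line of (\ref{nar})) remove the non-spectral part of the combined flow and convert the zero-curvature relations into isomonodromic ones.
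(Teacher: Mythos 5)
Your proposal is correct and follows exactly the route the paper indicates (in the sentence preceding the proposition): the isomonodromic representation is obtained by summing the zero-curvature representations of the flows entering (\ref{ueta}), identifying $W^j_n=-(\lambda-\alpha^j)Y^j_n$ so that $W_n=2V_n-\sum_j Y^j_n$, and using the stationarity $u_{n,\eta}=0$ implied by the second line of (\ref{nar}) to cancel the non-spectral terms. The only difference is that you have written out the bookkeeping that the paper leaves implicit.
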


It should be clarified that the converse is not true: equations (\ref{LW}) do not give the system (\ref{nar})  itself, but only a consequence from it, which is explained by the fact that the matrix (\ref{W}) contains (according to formulae (\ref{Ui}), (\ref{Hi}) and (\ref{hrec})) variables $u_{n-r},\dots,u_{n+r-1}$ which are not independent as one can see from the last equation (\ref{nar}). However, this is easy to fix: in order to obtain a matrix representation which is exactly equivalent to (\ref{nar}), we only need to transform $W_n$ by eliminating the variables $u_{n-r}$ and $u_{n+r-1}$ by use of this equation. We restrict ourselves to an example for the system (\ref{nar-xt}) from the Introduction, which corresponds to $r=2$. For it, the transformed matrix $W_n$ is of the form
\[
 W_n=\begin{pmatrix}
  4t\lambda u_n-n-1-(-1)^n\varepsilon & 
  4tu_n(u_{n-1}-\lambda)+n-\delta+(-1)^n\varepsilon) \\
  4t\lambda(\lambda+u_n+u_{n-1})+2\lambda x & 
  -4t\lambda(u_n+\lambda)-2x\lambda -n+(-1)^n\varepsilon
 \end{pmatrix}
 +\sum^m_{j=1}\frac{W^j_n}{\lambda-\alpha^j}
\]
where
\[
 W^j_n=
 \begin{pmatrix}
  \lambda y^j_n+(-1)^n\frac{\beta^j}{2} & 
  -(\lambda-\alpha^j)y^j_n -\dfrac{\alpha^j(y^j_n)^2+(-1)^n\beta^jy^j_n+\gamma^j}{y^j_n+y^j_{n-1}} \\
  \lambda(y^j_n+y^j_{n-1}) & 
  -\lambda y^j_n-(-1)^n\frac{\beta^j}{2}
 \end{pmatrix}.
\]
A direct calculation proves that equations (\ref{LW}) with this matrix are equivalent to the system (\ref{nar-xt}) and the equations (\ref{ux}), (\ref{yx}) and (\ref{ut}), (\ref{yt}) for $x$- and $t$-derivatives.

\section{Darboux--B\"acklund transformations}\label{s:DBT} 

In this section we consider the generic case $\alpha^j\ne0$ which makes possible to set $\beta^j=0$. In addition, we apply the change $\gamma^j=-\alpha^j(\omega^j)^2$, so that our constraint takes the form
\begin{equation}\label{nar'}
\left\{\begin{aligned}
 & u_n(y^j_{n+1}+y^j_n)(y^j_n+y^j_{n-1})=\alpha^j\left((y^j_n)^2-(\omega^j)^2\right),\\
 & 2\sum^r_{i=1}it_ih^{(i)}_n + n-\delta+(-1)^n\varepsilon=y^1_n+\dots+y^m_n.
\end{aligned}\right.
\end{equation}
Let us show that this system admits the B\"acklund transformations, which allow one to change any of the parameters $\omega^j$ by an integer. The transformation $B_k$ changes only one parameter $\omega^k$ and $\varepsilon$ (other parameters $\omega^j$, as well as $\alpha^j$, $\delta$ and $t_i$ do not change):
\begin{equation}\label{y.Bk}
 B_k:\quad
  \left\{\begin{aligned}
  & \tilde\omega^k=1-\omega^k,\quad \tilde\varepsilon=-\varepsilon,\\
  & \tilde u_n= u_n\frac{(y^k_{n+1}-\omega^k)(y^k_{n-1}+\omega^k)}{(y^k_n)^2-(\omega^k)^2},\\
  & \tilde y^j_n= \frac{1}{\alpha^j-\alpha^k}
     \biggl((\alpha^j+\alpha^k)y^j_n 
           -\alpha^k(y^k_n+\omega^k)\frac{y^j_{n+1}+y^j_n}{y^k_{n+1}+y^k_n}\\
  &\qquad\qquad\qquad\qquad -\alpha^k(y^k_n-\omega^k)\frac{y^j_n+y^j_{n-1}}{y^k_n+y^k_{n-1}}\biggr),\quad j\ne k,\\
  & \tilde y^k_n= -\sum_{j\ne k}\tilde y^j_n
     +2\sum^r_{i=1}it_ih^{(i)}_n(\tilde u_{n-i+1},\dots,\tilde u_{n+i-1})+n-\delta-(-1)^n\varepsilon.
  \end{aligned}\right.
\end{equation} 
In addition, there are transformations which act trivially on the varables $u_n$ and $y^j_n$, and only change the sign of a single parameter:
\[
 A_k:\quad \tilde\omega^k=-\omega^k.
\]
It is easy to see that $A_kB_k:\omega^k\mapsto\omega^k-1$ and $B_kA_k:\omega^k\mapsto\omega^k+1$. 

\begin{proposition}\label{prop:Bk}
The transformations $A_k$ and $B_k$ preserve the form of the system (\ref{nar'}), are consistent with the derivations (\ref{narDi}) and satisfy the commutation relations
\begin{equation}\label{ABgroup}
 A^2_k=\operatorname{id},~~ A_jA_k=A_kA_j,~~ 
 B^2_k=\operatorname{id},~~ B_jB_k=B_kB_j,~~ A_jB_k=B_kA_j,~~ j\ne k.
\end{equation}
The group generated by $A_1,B_1,\dots,A_n,B_n$ is isomorphic to ${\mathbb Z}^n_2\times{\mathbb Z}^n$.
\end{proposition}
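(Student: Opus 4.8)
The plan is to verify the three separate assertions of Proposition~\ref{prop:Bk} in turn: (i) that $A_k$ and $B_k$ map solutions of~(\ref{nar'}) to solutions of~(\ref{nar'}) with the stated new parameters, (ii) that they commute with the flows~(\ref{narDi}), and (iii) that they satisfy the group relations~(\ref{ABgroup}). The transformation $A_k$ is essentially trivial: since $\gamma^k=-\alpha^k(\omega^k)^2$ depends on $\omega^k$ only through $(\omega^k)^2$, the replacement $\omega^k\mapsto-\omega^k$ leaves the first equation of~(\ref{nar'}) literally unchanged, leaves the second equation untouched, and obviously commutes with the $t_i$-derivations; the relations $A_k^2=\operatorname{id}$, $A_jA_k=A_kA_j$ and $A_jB_k=B_kA_j$ for $j\ne k$ are then immediate because $A_k$ touches no dynamical variable and no parameter other than $\omega^k$, which $B_k$ ($j\ne k$) also leaves alone.

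The substantive work is the analysis of $B_k$. First I would check the defining difference equation for the index $j=k$: one substitutes the formula for $\tilde u_n$ and must show $\tilde u_n(\tilde y^k_{n+1}+\tilde y^k_n)(\tilde y^k_n+\tilde y^k_{n-1})=\alpha^k((\tilde y^k_n)^2-(\tilde\omega^k)^2)$ with $\tilde\omega^k=1-\omega^k$. The cleanest route is to recognise $B_k$ as a Darboux transformation at the point $\lambda=\alpha^k$: the matrix $W^k_n/(\lambda-\alpha^k)$ in Proposition~\ref{pr:nar-Lax} has a first-order pole there, and $B_k$ should be exactly the elementary (rank-one) Darboux transformation built from the kernel vector of $W^k_n$ at $\lambda=\alpha^k$, which explains the appearance of the factors $(y^k_n\pm\omega^k)$ (the square root $\omega^k$ being forced by $\det$ considerations, $\gamma^k=-\alpha^k(\omega^k)^2$). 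Concretely one writes a gauge matrix $M_n(\lambda)$, linear or rational in $\lambda$, with $\tilde L_n=M_{n+1}L_nM_n^{-1}$ and $\tilde W_n=M_nW_nM_n^{-1}+2\lambda M_{n,\lambda}M_n^{-1}$, reads off the transformation of the entries, and thereby obtains~(\ref{y.Bk}) together with the preservation of the whole Lax structure; consistency with the $t_i$-flows~(\ref{narDi}) then follows automatically from $\tilde U^{(i)}_n=M_nU^{(i)}_nM_n^{-1}+M_{n,t_i}M_n^{-1}$. For the $j\ne k$ components one verifies directly that the stated $\tilde y^j_n$ satisfies~(\ref{uy'}) with the \emph{same} $\alpha^j$ (a short computation using that both $y^j$ and $y^k$ solve~(\ref{uy'}) and that $\tilde u_n/u_n$ is the indicated ratio), and then multiplication by the integrating factor $\tilde y^j_{n+1}+\tilde y^j_n$ yields the first equation of~(\ref{nar'}); the $j=k$ equation is then what is left over, and the last line of~(\ref{y.Bk}) is precisely the second equation of~(\ref{nar'}) for the tilded variables, so it holds by definition once one checks the right-hand side there equals $\tilde y^1_n+\dots+\tilde y^m_n$, i.e. $\sum_{j\ne k}\tilde y^j_n+\tilde y^k_n$, which is a tautology given how $\tilde y^k_n$ is defined — the content being that this $\tilde y^k_n$ does solve its own difference equation, which is the Darboux computation above.

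Finally, for the group relations~(\ref{ABgroup}) involving $B_k$: the involutivity $B_k^2=\operatorname{id}$ follows because $\tilde\omega^k=1-\omega^k$ is an involution, $\tilde\varepsilon=-\varepsilon$ is an involution, and the formula for $\tilde u_n$ under the second application (now with $\omega^k$ replaced by $1-\omega^k$ and the new $y^k$) telescopes back to $u_n$ — this is a direct substitution made transparent by the Darboux picture, where $B_k^2$ corresponds to composing the elementary Darboux transformation at $\lambda=\alpha^k$ with its natural inverse. For $B_jB_k=B_kB_j$ with $j\ne k$, the point is that the two transformations act on disjoint ``primary'' parameters ($\omega^j$ versus $\omega^k$) and the cross-terms in the formulas for $\tilde y^\bullet$ are symmetric; in Darboux language, elementary transformations at distinct points $\lambda=\alpha^j$ and $\lambda=\alpha^k$ commute (the standard permutability theorem), and one only has to confirm that the bookkeeping of the shared auxiliary variable $\tilde y^k_n$ (defined via the second equation of~(\ref{nar'})) is consistent under either order — which again it is because that equation is itself symmetric in $j,k$. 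The isomorphism of the generated group with $\mathbb{Z}_2^n\times\mathbb{Z}^n$ is then a formal consequence: the $A_k$ generate $\mathbb{Z}_2^n$, the products $B_kA_k$ shift $\omega^k$ by $+1$ with no other effect and hence generate a free $\mathbb{Z}^n$, and the mixed relations in~(\ref{ABgroup}) show these two subgroups commute and intersect trivially.

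The main obstacle I anticipate is the $j=k$ difference equation — verifying that the Darboux-transformed $\tilde u_n$ and $\tilde y^k_n$ satisfy $\tilde u_n(\tilde y^k_{n+1}+\tilde y^k_n)(\tilde y^k_n+\tilde y^k_{n-1})=\alpha^k((\tilde y^k_n)^2-(1-\omega^k)^2)$ — since $\tilde y^k_n$ is defined only implicitly through the sum constraint and the other $\tilde y^j_n$, so the identity is not a one-line cancellation but requires either the full Lax-matrix gauge computation or a careful elimination using~(\ref{uy'}) for every index simultaneously. The rest is routine but lengthy algebra; organising it through the zero-curvature representation of Proposition~\ref{pr:nar-Lax} is what keeps it manageable.
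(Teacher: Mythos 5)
Your proposal is correct and follows essentially the same route as the paper: the author also derives $B_k$ as a Darboux transformation of the isomonodromic Lax pair, forcing the pole parameter $\mu$ to coincide with some $\alpha^k$, reading off $\tilde u_n$ and $\tilde y^j_n$ ($j\ne k$) from the residues, defining $\tilde y^k_n$ through the sum constraint because the $j=k$ residue relation is degenerate, and then verifying by direct (matrix-level) computation that the full compatibility condition holds with $\tilde\varepsilon=-\varepsilon$ and that the group relations are satisfied. You correctly identify the same main difficulty the paper flags, namely the final identity check for the $j=k$ component with the implicitly defined $\tilde y^k_n$.
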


The formal proof is rather tedious, and we outline only the main steps. To derive the $B_k$ transformations, we start from the general Darboux transformation for the Volterra lattice. It is defined as
\[
 \widetilde\Psi_n=\rho(\lambda)M_n\Psi_,\quad
 M_n=\begin{pmatrix}
  \lambda & -\lambda\mu f_n\\    
  -\dfrac{1}{f_n-1} & \dfrac{\lambda f_n}{f_n-1} 
 \end{pmatrix}
\]
where $\rho(\lambda)$ is an arbitrary scalar factor. The compatibility condition with the basic equation $\Psi_{n+1}=L_n\Psi_n$ reads $\tilde L_nM_n=M_{n+1}L_n$ and it is equivalent to relations
\begin{equation}\label{uf}
 u_n= -\mu(f_{n+1}-1)f_n,\quad \tilde u_n= -\mu f_{n+1}(f_n-1).
\end{equation}
The compatibility condition with equation $2\lambda\Psi_{n,\lambda}=W_n\Psi_n$ is 
\begin{equation}\label{MW}
 2\lambda M_{n,\lambda}+2\lambda\frac{\rho_\lambda}{\rho}M_n=\widetilde W_nM_n-M_nW_n.
\end{equation}
Here $W_n$ is the matrix of the form (\ref{W}). By using the relations (\ref{Yneg}) and (\ref{Ui}), (\ref{Hi}) for the matrices involved in $W_n$, as well as the last equation of the system (\ref{nar}), it is easy to prove that $\operatorname{tr}W_n$ is independent of the field variables and therefore it does not change under the transformation. Hence it follows that it should be $(\det\rho M_n)_\lambda=0$, which means that the normalization factor at $M_n$ should be chosen equal to
\[
 \rho(\lambda)=\lambda^{-1/2}(\lambda-\mu)^{-1/2}.
\]
Under this choice, the left hand side of (\ref{MW}) contains a pole at the point $\lambda=\mu$, and since the poles in the right hand side are exhausted by the points $\alpha^1,\dots,\alpha^m$, it follows that the parameter $\mu$ must coincide with one of them. Let $\mu=\alpha^k$, then it follows, by comparing relations (\ref{uf}) with the first equation (\ref{nar'}), that $f_n$ can be taken equal to
\[
 f_n=\frac{y^k_n-\omega^k}{y^k_n+y^k_{n-1}},
\]
and this also gives the expression for $\tilde u_n$ (in fact, these are the same discrete Miura transformations as in (\ref{Miura}), up to a linear change of $f_n$). Further, calculating the residues at the points $\alpha^j$ brings to the relations
\[
 \bigl(\widetilde W^j_nM_n-M_nW^j_n\bigr)\Big|_{\lambda=\alpha^j,\mu=\alpha^k}=0,\quad j\ne k,\qquad
 \bigl(\alpha^k M_n+\widetilde W^k_nM_n-M_nW^k_n\bigr)\Big|_{\lambda=\mu=\alpha^k}.
\]
For $j\ne k$, solving these equations gives formulae for $\tilde y^j_n$ and also shows that $\omega^j$ does not change. The relation for $j=k$ turns out to be degenerate: from it we only find $\tilde\omega^k=1-\omega^k$, but $\tilde y^k_n$ cannot be expressed. However, since the transformation formulae for all other variables are already established, $\tilde y^k_n$ is found from the assumption that the last equation (\ref{nar'}) is not changed. After this, it remains only to check (and this is the most difficult part of calculations) that the whole equation (\ref{MW}) turns into identity when $\tilde\varepsilon=-\varepsilon$ is chosen. Consistency with the continuous evolutions and the group identities are also verified by direct calculations at the level of the corresponding matrix representations.

\section{Conclusion}

In this paper we analyzed, for the Volterra hierarchy, the reductions of the string equations type (\ref{constr2}), that is, the stationary equations involving symmetries from the additional subalgebra. It was demonstrated that in the generic case (polynomial $Q$ without multiple roots), such reductions are uniformly written by using nonlocalities corresponding to negative symmetries of the lattice equation. Negative symmetry is itself a very interesting object, at least from the point of view of the study of symmetry algebra, but it is possible that it is also important as an independent integrable equation (recall that the negative symmetry of the KdV equation is point equivalent to the famous Camassa--Holm equation).
 
The case of multiple roots, omitted in the paper, is probably less interesting, but it requires a separate study. Within the framework of the proposed approach, it correspond to higher negative symmetries of the form $(R-\alpha)^{-i}(u_{n,x})$, however it is not clear whether this gives any advantage comparing with the original form of the equation (\ref{constr2}).

We have not touched upon the questions of the analytic properties of the solutions of the constructed reductions. Small-dimensional examples studied in \cite{Fokas_Its_Kitaev_1991, Adler_Shabat_2019}, as well as similar results on string equations for the KdV equation (see, in particular, \cite{Suleimanov_1994, Adler_2020b}) allow us to expect that the reductions under study admit special solutions important for physical applications, but their isolation and study remains a very difficult open problem.

\subsection*{Acknowledgements}

The work was done at Ufa Institute of Mathematics with the support by the grant \#21-11-00006 of the Russian Science Foundation, https://rscf.ru/project/21-11-00006/.

\label{lastpage}
\end{document}